\documentclass[11pt]{article}
\usepackage[utf8]{inputenc}
\usepackage[T1]{fontenc}
\usepackage{lmodern}
\usepackage{microtype}
\usepackage[a4paper,margin=1in]{geometry}
\usepackage{amsmath,amssymb,amsthm}
\usepackage{booktabs}
\usepackage{graphicx}
\usepackage{xcolor}
\usepackage{hyperref}
\hypersetup{colorlinks=true,linkcolor=blue,citecolor=blue,urlcolor=blue}

\newtheorem{theorem}{Theorem}

\newtheorem{proposition}{Proposition}
\newtheorem{corollary}{Corollary}
\theoremstyle{definition}
\newtheorem{definition}{Definition}
\theoremstyle{remark}
\newtheorem{remark}{Remark}

\title{RetractorDB: A Deterministic Edge Signal Processing Engine\\
Based on Rational Beatty Sequences and Fraenkel's Partition}

\author{Michal Widera\\
\small Independent Researcher, Zabrze, Poland\\
\small ORCID: \href{https://orcid.org/0000-0002-3578-3792}{0000-0002-3578-3792}\\
\small \href{mailto:michal@widera.com.pl}{michal@widera.com.pl}}
\date{July 2026}

\begin{document}

\maketitle

\begin{abstract}
We present RetractorDB, an open-source edge signal processing engine
(ESPE) for regular time series whose query semantics is grounded in the
number theory of covering systems. RetractorDB is designed to support, not
replace, time-series databases (TSDB) and data stream management systems
(DSMS): deployed close to the signal source, it pre-processes and filters
high-frequency measurements on the edge device through a declarative
signal-processing query language, maintains a partial, correctable record
of past and scheduled future events in inspectable artifacts, and transmits
exact, deterministic results upstream, so that only reduced,
already-processed streams reach the central architecture. The data
model is differential (a stream is a pair
$(s_n,\Delta)$ with a constant rational inter-arrival interval), and the
core rate-conversion operators, interleave and de-interleave, are proved to
be rational Beatty sequences satisfying the conditions of Fraenkel's
partition theorem. This yields an algebra in which resampling is an exact,
deterministic, first-class operator: de-interleaving inverts interleaving
bit-for-bit using rational arithmetic alone, and algebraic rewrite rules
license query-plan optimization without changing results. We describe the
end-to-end realization of this algebra in a working engine: declarative
query language (RQL), compilation to a dependency DAG with rational interval
resolution, slot-based runtime scheduling, and an inspectable artifact
format with schema and null/gap metadata. We validate the semantics on
deterministic query examples drawn from the engine's integration tests,
including a complete Pan--Tompkins QRS-detection pipeline over MIT-BIH ECG
data expressed entirely within the algebra. A performance evaluation under a
real-time operating environment is in progress and deferred to a subsequent
version.
\end{abstract}

\paragraph{Keywords.}
edge signal processing; stream processing; time series; Beatty sequences;
Fraenkel's theorem;
covering systems; exact rational arithmetic; deterministic execution;
continuous queries; multirate signal processing.

\section{Introduction}
\label{sec:intro}

Continuous stream systems often face three hard requirements at once: low
latency, determinism, and mathematically grounded operator semantics.
RetractorDB addresses this combination by connecting number-theoretic
covering systems with a practical database query-processing stack.

The system perspective is intentionally simple: users declare source streams
and continuous transformations in RQL, compile and execute with
\texttt{xretractor}, inspect live outputs via \texttt{xqry}, and validate
binary artifacts with \texttt{xtrdb}. Unlike typical best-effort stream
runtimes, RetractorDB emphasizes deterministic execution order and
reproducible outputs: identical input traces produce identical artifacts.

\paragraph{Positioning: an edge companion to TSDB and DSMS.}
RetractorDB is best understood as an \emph{edge signal processing engine}
(ESPE), not as a replacement for a time-series database (TSDB) or a
general-purpose data stream management system (DSMS). In a typical
deployment it sits between the signal source and an
upstream store: it ingests regular streams at the edge, applies exact rate
conversion and signal operators, and manages a bounded, artifact-based
record of the stream. The primary operational motivation is offloading.
High-frequency sources such as biomedical, industrial, or acoustic sensors
saturate links and ingestion pipelines when shipped raw; because RQL
expresses complete signal-processing pipelines declaratively (filtering,
decimation, feature extraction; see the Pan--Tompkins pipeline of
Section~\ref{sec:ecg}), the processing happens where the data is recorded,
and only rate-reduced, already-processed streams need to be transmitted
and retained centrally. This substantially relieves the transmission,
ingestion, and storage layers of the overall architecture.
Because every stream carries a constant rational
inter-arrival interval $\Delta$, the slot timeline is known ahead of
execution; the artifact model can therefore materialize slots for
\emph{future} events before their payloads arrive, and its shadow mechanism
permits partial, non-destructive correction of \emph{past} events
(Section~\ref{sec:event-artifact} develops this into a concrete
fault-recording scenario). The same
artifacts serve as the transmission unit toward the upstream TSDB or DSMS,
which remains the system of record for long-term retention, compression,
and ad-hoc analytics. In this division of labor, RetractorDB manages and
stores data in transit (exactly, deterministically, and auditably),
while the upstream system manages it at rest.

The distinguishing design decision is the data model. Mainstream DSMS
model a stream as a multiset of timestamped tuples
$\langle s,\tau\rangle$ and build query semantics on
windows~\cite{arasu2006,kramer2009}. RetractorDB instead adopts a
differential model $(s_n,\Delta)$ with a constant rational rate per stream,
and derives its rate-aligning operators, interleave and de-interleave,
from the theory of Beatty sequences and Fraenkel's partition
theorem~\cite{beatty1926,fraenkel1969}. Because Fraenkel's generalization
holds for rational parameters, the entire construction is realizable exactly
on a computer: no floating-point approximation enters the semantics. The
need for such an algebra was first articulated in the context of
computer-aided fetal monitoring~\cite{widera2003}; the
formal bridge between covering systems and stream alignment was established
in peer-reviewed form in 2006~\cite{widera2006,widera2006pl}; this paper
brings that result, its proofs, and the engine built on it into well-indexed
circulation, and positions it against five neighboring research lines
(Section~\ref{sec:related}). The scope of this paper is semantics and system
architecture; a quantitative performance study under a real-time Linux
environment is underway and will be reported in follow-up work.

\paragraph{Contributions.}
\begin{itemize}
\item A formal semantics for exact stream resampling: proofs that the
interleave operation is a sequential covering and that de-interleaving is a
rational Beatty sequence satisfying Fraenkel's conditions, hence exactly
invertible (Section~\ref{sec:foundations}).
\item Operator properties (commutativity of the stream sum, an interleave
shift-matching identity, event-order perturbation) that serve as rewrite
rules for query-plan optimization (Section~\ref{sec:foundations}).
\item A gap analysis positioning this synthesis against number theory,
Beatty-based scheduling, multirate DSP, DSMS, and time-series management
systems (Section~\ref{sec:related}).
\item An engine report from the open-source implementation: three-binary
architecture, compilation pipeline, rational slot scheduling, and an
artifact format designed for auditability and upstream transmission
(Sections~\ref{sec:system} and~\ref{sec:pipeline}).
\item A reproducibility-first semantic validation protocol based on
executable query examples with observed deterministic outputs, including a
Pan--Tompkins QRS-detection pipeline over MIT-BIH data
(Section~\ref{sec:examples}).
\end{itemize}

\section{Formal Foundations}
\label{sec:foundations}

This section presents the formal skeleton of the stream algebra: the data
model, the connection between stream operators and the theory of covering
systems, and the theorems on which the correctness and optimization of query
plans rely. The construction deliberately remains within a single domain:
the rational numbers. This is not a stylistic choice but the central point:
Beatty's theorem requires irrational parameters, which are not representable
in a computer, whereas Fraenkel's generalization admits rational parameters.
The theorems below show that the interleave and de-interleave operations are
a special case of Beatty sequences satisfying Fraenkel's conditions, and
are therefore realizable \emph{exactly} using rational arithmetic alone.

\subsection{Data Model and Operator Signature}

\begin{definition}[Regular time series]
\label{def:stream}
A \emph{data stream} (equivalently, a \emph{regular time series}) is an
ordered pair
\begin{equation}
S := (s_n, \Delta),
\end{equation}
where $(s_n)_{n=0}^{\infty}$ is an ordered series of tuples and
$\Delta \in \mathbb{Q}_{>0}$ is the constant, rational time interval between
consecutive elements. Stream indices run from $0$ throughout this paper;
the element $s_n$ carries the implied timestamp $(n+1)\Delta$, i.e., the
first element arrives one full interval after the stream's origin.
\end{definition}

This differential model $(s_n,\Delta)$ is the key point of departure from
mainstream DSMS semantics, where a stream is a multiset of timestamped pairs
$\langle s,\tau\rangle$~\cite{arasu2006}. Here the timestamp is not carried
per tuple; it is implied by the position $n$ and the rate $\Delta$.

The algebra underlying the query language is
\begin{equation}
A_{\mathrm{rql}} ::= \bigl( (s_n,\Delta_s),\;
(\#, \&, \%, +, -, >, @) \bigr),
\end{equation}
with the correspondence between formal operators and query-language symbols
given in Table~\ref{tab:operators}.

\begin{table}[t]
\centering
\caption{Formal operators and their query-language symbols.}
\label{tab:operators}
\begin{tabular}{lcc}
\toprule
Operation & Formal symbol & RQL symbol \\
\midrule
Projection & $\pi$ & field list after \texttt{SELECT} \\
Selection & $\sigma$ & logical condition \\
Sum & $\Sigma$ & \texttt{+} \\
Difference & $\delta$ & \texttt{-} \\
Interleave (entangle) & $\varphi$ & \texttt{\#} \\
De-interleave and its complement & $\Theta$, $\sim\!\Theta$ & \texttt{\&}, \texttt{\%} \\
Aggregation/serialization (AGSE) & $\Psi$ & \texttt{@} \\
Time shift & $\tau$ & \texttt{>} \\
\bottomrule
\end{tabular}
\end{table}

\begin{definition}[Interleave]
\label{def:interleave}
For streams $A=(a_n,\Delta_a)$ and $B=(b_n,\Delta_b)$, the interleave
$\varphi(A,B)$ produces the stream $C=(c_n,\Delta_c)$ with, for $n \ge 0$,
\begin{equation}
\label{eq:interleave}
c_{n}=
\begin{cases}
b_{\,n-\lfloor n z \rfloor}, & \lfloor n z \rfloor = \lfloor (n+1) z \rfloor,\\[2pt]
a_{\lfloor n z \rfloor}, & \lfloor n z \rfloor \neq \lfloor (n+1) z \rfloor,
\end{cases}
\qquad
z = \frac{\Delta_b}{\Delta_a+\Delta_b},
\qquad
\Delta_c = \frac{\Delta_a \Delta_b}{\Delta_a+\Delta_b}.
\end{equation}
\end{definition}

\begin{definition}[De-interleave]
\label{def:deinterleave}
The de-interleave is defined by two complementary operators, both indexed
over $n \ge 0$: $\Theta$ recovering the left constituent stream,
\begin{equation}
\label{eq:deinterleave-a}
a_{n} = c_{\,n+ \left\lceil \frac{(n+1)\Delta_a}{\Delta_b} \right\rceil},
\qquad
\Delta_a = \frac{\Delta_c \Delta_b}{\lvert \Delta_c-\Delta_b\rvert},
\end{equation}
and $\sim\!\Theta$ producing the residue (right constituent) stream,
\begin{equation}
\label{eq:deinterleave-b}
b_{n} = c_{\,n+\left\lfloor \frac{n\Delta_b}{\Delta_a}\right\rfloor},
\qquad
\Delta_b = \frac{\Delta_c \Delta_a}{\lvert \Delta_c-\Delta_a\rvert}.
\end{equation}
\end{definition}

\begin{definition}[Time shift]
\label{def:shift}
For $S = (s_n, \Delta)$ and $m \in \mathbb{N}$, the time shift is
$\tau_m(S) := \bigl( (s_{n+m})_{n=0}^{\infty},\, \Delta \bigr)$: the
stream advanced by $m$ samples, i.e., by the duration $m\Delta$.
\end{definition}

\begin{definition}[Stream sum]
\label{def:sum}
For $A=(a_n,\Delta_a)$ and $B=(b_n,\Delta_b)$, the stream sum
$\Sigma(A,B)$ produces the stream of concatenated tuples
$C=(c_n,\Delta_c)$ with, for $n \ge 0$,
\begin{equation}
\label{eq:sum}
c_n =
\begin{cases}
\bigl( a_n,\; b_{\lfloor n\Delta_a/\Delta_b \rfloor} \bigr),
& \Delta_a \le \Delta_b,\\[2pt]
\bigl( a_{\lfloor n\Delta_b/\Delta_a \rfloor},\; b_n \bigr),
& \Delta_a > \Delta_b,
\end{cases}
\qquad
\Delta_c = \min(\Delta_a, \Delta_b):
\end{equation}
the faster stream drives the output rate, and each of its elements is
paired with the element occupying the co-indexed slot of the slower
stream.
\end{definition}

The remaining operators of $A_{\mathrm{rql}}$ play no role in the
rate-conversion theory developed below: projection $\pi$ and selection
$\sigma$ act tuple-wise as in relational algebra, the window operator
$\Psi$ is illustrated operationally in Section~\ref{sec:examples}, and the
stream difference $\delta$ undoes $\Sigma$ by removing the attributes
contributed by its second argument; their formal definitions appear
in~\cite{widera2006}.

The pair $(\varphi;\Theta,\sim\!\Theta)$ is intended to behave like
multiplication and division on $\mathbb{N}$: interleaving composes two
streams into one, de-interleaving recovers a constituent together with a
well-defined remainder. Theorems~\ref{thm:covering}
and~\ref{thm:fraenkel} below make this claim precise.

\subsection{Beatty Sequences and Fraenkel's Theorem}

\begin{definition}[Parameterized Beatty sequence]
For $\alpha,\alpha' \in \mathbb{R}$, define
\begin{equation}
\mathcal{B}(\alpha,\alpha') :=
\left(\left\lfloor \frac{n-\alpha'}{\alpha} \right\rfloor\right)_{n=1}^{\infty}.
\end{equation}
The parameters have a direct geometric reading: $\alpha$ is the density,
$1/\alpha$ the slope, $\alpha'$ the shift, and $-\alpha'/\alpha$ the
$y$-intercept.
\end{definition}

This single definition yields a family of sequences. The partition
results below always concern a \emph{pair} of its instances with
different parameters: the pair is written
$\mathcal{B}(\alpha,\alpha')$ and $\mathcal{B}(\beta,\beta')$, the latter
denoting the complementary member.

Beatty's theorem~\cite{beatty1926} guarantees that two such sequences
partition $\mathbb{N}$ when the parameters are irrational with
$1/p + 1/q = 1$. Fraenkel's theorem~\cite{fraenkel1969} generalizes the
partition criterion and, crucially for computability, admits rational
parameters. We restate it in the form used throughout this paper; an
accessible proof is given by O'Bryant~\cite{obryant2003}.

\begin{theorem}[Fraenkel, 1969]
\label{thm:fraenkel-orig}
The sequences $\mathcal{B}(\alpha,\alpha')$ and
$\mathcal{B}(\beta,\beta')$ partition $\mathbb{N}$ if and only if the
following five conditions hold:
\begin{enumerate}
\item $0 < \alpha < 1$;
\item $\alpha + \beta = 1$;
\item $0 \le \alpha + \alpha' \le 1$;
\item if $\alpha$ is irrational, then $\alpha' + \beta' = 0$ and
$k\alpha + \alpha' \notin \mathbb{Z}$ for all integers $k \ge 2$;
\item if $\alpha$ is rational, and $q \in \mathbb{N}$ is the least number
such that $q\alpha \in \mathbb{N}$, then
$\frac{1}{q} \le \alpha + \alpha'$ and
$\lceil q\alpha' \rceil + \lceil q\beta' \rceil = 1$.
\end{enumerate}
\end{theorem}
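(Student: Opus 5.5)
Because Theorem~\ref{thm:fraenkel-orig} is a classical result, the paper could simply cite Fraenkel~\cite{fraenkel1969} or the proof of O'Bryant~\cite{obryant2003}. If a self-contained argument is wanted, I would follow the counting-function method, which reduces partition to an identity between two ceiling expressions.

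\textbf{Step 1: the sequences are strictly increasing and start in the right place.} For $0<\alpha<1$, consecutive arguments $(n-\alpha')/\alpha$ differ by $1/\alpha>1$. Hence $\mathcal{B}(\alpha,\alpha')$ is strictly increasing, and the same holds for $\mathcal{B}(\beta,\beta')$. Strict increase is necessary: repeated values would make the two sequences fail to be disjoint as multisets. This is where condition~1 comes from, and with $\alpha+\beta=1$ it forces $0<\beta<1$ as well. The first term of $\mathcal{B}(\alpha,\alpha')$ is $\lfloor (1-\alpha')/\alpha\rfloor$. Requiring it to lie in $\mathbb{N}$ under the paper's convention, and not to skip the smallest natural number, gives the inequalities $0\le\alpha+\alpha'\le1$ of condition~3, and in the rational case the sharper bound $1/q\le\alpha+\alpha'$.

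\textbf{Step 2: compute the counting functions.} Let $C_\alpha(N)=\#\{n\ge1:\lfloor (n-\alpha')/\alpha\rfloor\le N\}$. Since $\lfloor x\rfloor\le N$ exactly when $x<N+1$, this counts the integers $n\ge1$ with $n<\alpha(N+1)+\alpha'$. Therefore
\begin{equation*}
C_\alpha(N)=\max\bigl(0,\lceil \alpha(N+1)+\alpha'\rceil-1\bigr).
\end{equation*}
Two strictly increasing sequences partition $\mathbb{N}$ if and only if $C_\alpha(N)+C_\beta(N)$ equals the number of naturals up to $N$, for every $N$. Dividing by $N$ and letting $N\to\infty$ gives the necessity of condition~2, $\alpha+\beta=1$. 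Substituting $\beta=1-\alpha$ and $m=N+1$, the requirement becomes the functional identity
\begin{equation*}
\lceil \alpha m+\alpha'\rceil+\lceil -\alpha m+\beta'\rceil=2\qquad\text{for all } m\ge1,
\end{equation*}
after shifting by the integer $m$; the exact constant depends on the convention for $\mathbb{N}$.

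\textbf{Step 3: analyse the identity by the arithmetic of $\alpha m \bmod 1$.} Write $\alpha m=k+\theta$ with $k=\lfloor\alpha m\rfloor$ and $\theta=\{\alpha m\}$, so the identity depends only on $\theta$.

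In the irrational case, $\theta$ is dense in $[0,1)$ by equidistribution. Letting $\theta$ approach suitable values from both sides forces $\alpha'+\beta'=0$. The identity then fails exactly when $\theta+\alpha'$ is an integer, which excludes $k\alpha+\alpha'\in\mathbb{Z}$; the range $k\ge2$ is what remains after the boundary conditions of Step~1 have handled small indices. This gives condition~4.

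In the rational case, $\theta$ runs periodically through $\{0,1/q,\dots,(q-1)/q\}$. The identity becomes
\begin{equation*}
\lceil (j+q\alpha')/q\rceil+\lceil(-j+q\beta')/q\rceil=2\qquad\text{for } j=0,\dots,q-1.
\end{equation*}
Using $\lceil (j+x)/q\rceil=\lceil (j+\lceil x\rceil)/q\rceil$ for integer $j$, this is equivalent to $\lceil q\alpha'\rceil+\lceil q\beta'\rceil=1$. Together with the start condition $1/q\le\alpha+\alpha'$ from Step~1, this gives condition~5.

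\textbf{Main obstacle.} I expect the main difficulty to be the boundary and convention bookkeeping rather than the core identity. This covers whether $\mathbb{N}$ contains $0$, the effect of the $\max(0,\cdot)$ truncation for small $N$, and the exact place where the hypotheses $k\ge2$ and $1/q\le\alpha+\alpha'$ enter. I would first fix conventions by checking the classical instance $\alpha'=\beta'=0$ with irrational $\alpha$, which is Beatty's theorem. I would then verify the rational reduction on a small example such as $\alpha=1/2$ before writing the general ceiling manipulation.
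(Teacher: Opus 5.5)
The paper does not prove this statement; it quotes it from Fraenkel and O'Bryant. Your counting-function argument is therefore a self-contained route of your own. The overall plan is sound: monotonicity, counting functions, density for $\alpha+\beta=1$, then analysis of $\alpha m \bmod 1$. But Step~1 asserts something that cannot be proved, namely that a start condition forces $\frac1q\le\alpha+\alpha'$ in the rational case. Write $\alpha=p/q$ in lowest terms. Then $\lfloor (n-\alpha')/\alpha\rfloor=\lfloor (qn-\lceil q\alpha'\rceil)/p\rfloor$, so the sequence depends on $\alpha'$ only through $\lceil q\alpha'\rceil$, and no finer condition on $\alpha'$ can be necessary. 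For a concrete case, take $\alpha=\beta=\tfrac12$, $\alpha'=-\tfrac14$, $\beta'=\tfrac12$. Then $\mathcal{B}(\tfrac12,-\tfrac14)=(2n)_{n\ge1}$ and $\mathcal{B}(\tfrac12,\tfrac12)=(2n-1)_{n\ge1}$ partition $\mathbb{N}$, with $\lceil q\alpha'\rceil+\lceil q\beta'\rceil=0+1=1$. Yet $\alpha+\alpha'=\tfrac14<\tfrac12=\tfrac1q$.

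What the start conditions actually give is this. The first term of $\mathcal{B}(\alpha,\alpha')$ is $\ge1$ iff $\alpha+\alpha'\le1$, and the first term of $\mathcal{B}(\beta,\beta')$ is $\ge1$ iff $\beta+\beta'\le1$. The lower bound on $\alpha+\alpha'$ comes from the \emph{second} sequence; nothing forces $\mathcal{B}(\alpha,\alpha')$ itself to contain $1$. That bound reaches $\alpha'$ only through the relation between $\alpha'$ and $\beta'$ that Step~3 produces, so this part of Step~1 must come after Step~3. In the irrational case, $\alpha'+\beta'=0$ turns it into $0\le\alpha+\alpha'$. In the rational case, $q\beta\in\mathbb{Z}$ and $\lceil q\alpha'\rceil+\lceil q\beta'\rceil=1$ turn it into $\alpha+\alpha'>0$, i.e.\ $\frac1q\le\alpha+\lceil q\alpha'\rceil/q$. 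This coincides with the stated clause only after normalising $q\alpha'\in\mathbb{Z}$, which the identity above shows is harmless. So you must either impose that normalisation or prove $\alpha+\alpha'>0$ instead. As printed, the \emph{only if} direction of that clause is false, and your Step~1 claims to derive it.

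The constant in Step~2 is also not a free convention here. The paper's $\mathbb{N}$ excludes $0$ (it defines $\mathbb{N}_0:=\mathbb{N}\cup\{0\}$ separately), so $C_\alpha(N)+C_\beta(N)=N$ for all $N\ge0$. The identity is then $\lceil\alpha m+\alpha'\rceil+\lceil-\alpha m+\beta'\rceil=1$, required only for $m\ge2$. The case $m=1$ is exactly the pair of start conditions, which is why $k\ge2$ appears in condition~4. For sufficiency, conditions~3--5 make the $\max(0,\cdot)$ truncations inactive for $m\ge2$. With your ``$=2$'', the density argument would force $\alpha'+\beta'=1$, and the residue argument would force $\lceil q\alpha'\rceil+\lceil q\beta'\rceil=q+1$. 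So the conclusions of Step~3 do not follow from the identity you display.

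With constant $1$, set $A=\lceil q\alpha'\rceil$ and $B=\lceil q\beta'\rceil$. As $j$ runs over a complete residue system mod $q$, so does $u\equiv j+A$. The condition becomes $\lceil (A+B)/q\rceil=1$ for $u=0$, and $\lceil (A+B-u)/q\rceil=0$ for $1\le u\le q-1$. These together hold iff $A+B=1$. That computation, not the ceiling-absorption rule alone, is the actual content of the rational case.
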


The proofs below rely on elementary properties of the floor and ceiling
functions, collected here for self-containment. For $x \in \mathbb{R}$ and
$C \in \mathbb{Z}$:
\begin{align}
\lfloor x \rfloor = \lceil x \rceil &\iff x \in \mathbb{Z},
\label{eq:fc1}\\
\lfloor x \rfloor + 1 = \lceil x \rceil &\iff x \in \mathbb{R}\setminus\mathbb{Z},
\label{eq:fc2}\\
\lfloor x + C \rfloor = \lfloor x \rfloor + C. &
\label{eq:fc3}
\end{align}
Additionally, for $a,b \in \mathbb{N}_{>0}$, the greatest common divisor
satisfies
\begin{equation}
\label{eq:gcd-cases}
\gcd(a,b) = b \iff \tfrac{a}{b} \in \mathbb{N},
\qquad\text{and otherwise}\qquad
1 \le \gcd(a,b) \le \min(a,b),
\end{equation}
which yields the disjoint case split used in the proof of
Theorem~\ref{thm:fraenkel}.

\subsection{Correctness of Interleave and De-interleave}

\begin{theorem}[Interleave is a sequential covering]
\label{thm:covering}
The interleave operation~\eqref{eq:interleave} provides a sequential
covering of both index sets of its argument streams: every element of $A$
and every element of $B$ is selected exactly once, in order, with no gaps
and no repetitions.
\end{theorem}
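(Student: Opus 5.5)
The proof reduces to a counting argument on the indicator sequence $e_n := \lfloor (n+1)z\rfloor - \lfloor nz\rfloor$, $n\ge 0$. Definition~\ref{def:interleave} sends $c_n$ to $A$ exactly when $e_n\neq 0$ and to $B$ exactly when $e_n=0$. The plan is to show three things: $e_n\in\{0,1\}$; the index requested from each argument stream equals the number of earlier slots routed to that stream; and both streams receive infinitely many slots. Together these give \emph{every element selected exactly once, in order, with no gaps and no repetitions}.

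\textbf{Step 1: the indicator is binary.} Since $\Delta_a,\Delta_b\in\mathbb{Q}_{>0}$, we have $0<z=\Delta_b/(\Delta_a+\Delta_b)<1$. For any real $x$,
\[
\lfloor x+z\rfloor-\lfloor x\rfloor\in\{0,1\}.
\]
To see this, write $x=\lfloor x\rfloor+\{x\}$ and apply \eqref{eq:fc3}; the difference becomes $\lfloor \{x\}+z\rfloor$, and $0\le\{x\}+z<2$. So each slot $n$ is assigned to exactly one of $A$ (when $e_n=1$) or $B$ (when $e_n=0$), and the two cases of \eqref{eq:interleave} are exhaustive and disjoint.

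\textbf{Step 2: the requested index is a running count.} The sum of the indicators telescopes:
\[
\sum_{k=0}^{n-1} e_k=\lfloor nz\rfloor-\lfloor 0\rfloor=\lfloor nz\rfloor.
\]
Hence $\lfloor nz\rfloor$ is precisely the number of slots $k<n$ already assigned to $A$. Likewise $n-\lfloor nz\rfloor$ is the number of slots $k<n$ already assigned to $B$.

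Consequently, if $n_0<n_1<\dots$ enumerates the slots with $e_n=1$, then slot $n_j$ requests $a_{\lfloor n_j z\rfloor}=a_j$. Symmetrically, if $m_0<m_1<\dots$ enumerates the slots with $e_n=0$, then slot $m_j$ requests $b_{m_j-\lfloor m_j z\rfloor}=b_j$. Each argument index is therefore requested by exactly one slot, in increasing order, with no index skipped or repeated.

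\textbf{Step 3: no stream is starved.} It remains to check that both enumerations are infinite, so that every $a_j$ and every $b_j$ is actually reached. This follows from two growth bounds:
\[
\lfloor nz\rfloor\ge nz-1\to\infty,
\qquad
n-\lfloor nz\rfloor\ge n(1-z)\to\infty,
\]
both of which use $0<z<1$.

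I expect the main obstacle to be only bookkeeping. One point is the indexing convention: indices start at $0$, and the $A$-branch reads $\lfloor nz\rfloor$ rather than $\lfloor (n+1)z\rfloor$. The telescoping identity of Step 2 is exactly what makes this off-by-one consistent, so I would state it as a separate lemma.

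As an optional remark, one can note that the $A$-slots form a Beatty-type sequence $\{n:\lfloor (n+1)z\rfloor>\lfloor nz\rfloor\}$ whose complement is the $B$-slots. This foreshadows the Fraenkel connection of Theorem~\ref{thm:fraenkel}, but it is not needed here, since Steps 1 and 2 already establish the partition directly.
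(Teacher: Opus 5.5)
Your proof is correct and follows essentially the paper's argument: the paper shows that the requested index $\lfloor nz\rfloor$ (resp.\ $n-\lfloor nz\rfloor$) starts at $0$ and increases by exactly $1$ at each step of its own branch. That is the stepwise form of your telescoping identity $\sum_{k<n} e_k=\lfloor nz\rfloor$, and your Step~3 (both branches occur infinitely often, since $\lfloor nz\rfloor\to\infty$ and $n-\lfloor nz\rfloor\to\infty$) makes explicit a point the paper leaves implicit when it concludes that all indices $0,1,2,\dots$ are used.
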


\begin{proof}
Since $0 < z < 1$, the increment
$d_n := \lfloor (n+1) z \rfloor - \lfloor n z \rfloor$ equals either $0$
or $1$ for every $n \ge 0$. The interleave~\eqref{eq:interleave} selects
an element of $B$ exactly at the steps with $d_n = 0$ (the equality
branch) and an element of $A$ exactly at the steps with $d_n = 1$.

Consider the $B$-selection index $x_n = n - \lfloor n z \rfloor$. Across
one step, $x_{n+1} - x_n = 1 - d_n$: it increases by exactly $1$ at each
$B$-selection step and is unchanged otherwise. Consequently, if $n < n'$
are two consecutive $B$-selection steps, then $x_{n'} = x_n + 1$; and the
first $B$-selection step is $n = 0$ (as $0 < z < 1$ gives
$\lfloor 0 \rfloor = \lfloor z \rfloor = 0$, i.e.\ $d_0 = 0$) with
$x_0 = 0$. The $B$-selections therefore use the indices $0, 1, 2, \dots$
in order, with no gaps and no repetitions.

Symmetrically, the $A$-selection index $\lfloor n z \rfloor$ increases by
exactly $1$ at each $A$-selection step ($d_n = 1$) and is unchanged
otherwise, and at the first such step its value is $0$ (all preceding
steps have $d = 0$). Hence the elements of $A$ are likewise selected
exactly once each, in order.
\end{proof}

\begin{theorem}[De-interleave satisfies Fraenkel's conditions]
\label{thm:fraenkel}
Let $a,b \in \mathbb{N}_{>0}$ represent the rational rate ratio of the
constituent streams, $\Delta_a/\Delta_b = a/b$. The two tuple-selection
sequences defining the de-interleave
operation~\eqref{eq:deinterleave-a}--\eqref{eq:deinterleave-b} are, up to
the index alignment made explicit in the proof, a special
case of Beatty sequences satisfying the conditions of
Theorem~\ref{thm:fraenkel-orig} for rational parameters. Consequently, they
partition $\mathbb{N}_0 := \mathbb{N} \cup \{0\}$, the index set of the
interleaved stream, and de-interleaving exactly inverts interleaving
using rational arithmetic alone.
\end{theorem}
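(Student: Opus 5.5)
The approach is to convert the two index formulas of Definition~\ref{def:deinterleave} into members of the family $\mathcal{B}(\alpha,\alpha')$, find the parameters explicitly, and check conditions 1--5 of Theorem~\ref{thm:fraenkel-orig} by direct computation. Only the ratio $\Delta_a/\Delta_b$ enters the formulas, so I first reduce $a/b$ to lowest terms. This is where the case split~\eqref{eq:gcd-cases} enters: if $g=\gcd(a,b)>1$, replace $(a,b)$ by $(a/g,b/g)$. From then on $\gcd(a,b)=1$, and also $\gcd(a+b,b)=1$.

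\emph{Step 1: the $\Theta$-positions.} With $\Delta_a/\Delta_b=a/b$, the $k$-th selected index is $n_k=k+\lceil (k+1)a/b\rceil=\lceil (k+1)(a+b)/b\rceil-1$. Since $(k+1)(a+b)$ is an integer, I use $\lceil y/b\rceil-1=\lfloor (y-1)/b\rfloor$ for integer $y$ (a consequence of \eqref{eq:fc1}--\eqref{eq:fc3}). With $j=k+1\ge 1$ this gives
\[
n_k=\left\lfloor \frac{j-\frac{1}{a+b}}{\alpha}\right\rfloor,\qquad \alpha=\frac{b}{a+b}.
\]
\emph{Step 2: the $\sim\!\Theta$-positions.} Similarly, $m+\lfloor mb/a\rfloor=\lfloor m(a+b)/a\rfloor=\lfloor (j-1)/\beta\rfloor$ with $j=m+1$ and $\beta=a/(a+b)$.

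\emph{Step 3: the index alignment.} Both sequences take values in $\mathbb{N}_0$, whereas Theorem~\ref{thm:fraenkel-orig} concerns partitions of $\mathbb{N}$. I therefore add $1$ to every value, which is the map $\mathbb{N}_0\to\mathbb{N}$, $n\mapsto n+1$. Adding $1$ replaces the shift $\alpha'$ by $\alpha'-\alpha$, since $\lfloor (j-\alpha')/\alpha\rfloor+1=\lfloor (j-\alpha'+\alpha)/\alpha\rfloor$. This yields the parameters
\[
\alpha=\frac{b}{a+b},\quad \alpha'=\frac{1-b}{a+b},\qquad
\beta=\frac{a}{a+b},\quad \beta'=1-\frac{a}{a+b}=\frac{b}{a+b}.
\]

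\emph{Step 4: checking the conditions.} Conditions 1 and 2 are immediate because $a,b\ge 1$. For condition 3, $\alpha+\alpha'=1/(a+b)\in[0,1]$. Condition 4 is vacuous because $\alpha$ is rational. For condition 5, $\gcd(a+b,b)=1$ gives $q=a+b$, so $1/q=\alpha+\alpha'$ and the inequality holds with equality. Finally $\lceil q\alpha'\rceil+\lceil q\beta'\rceil=(1-b)+b=1$. Fraenkel's theorem then says the shifted sequences partition $\mathbb{N}$, so the originals partition $\mathbb{N}_0$.

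\emph{Step 5: exact inversion.} I compare the $\Theta$-positions with the interleave of Definition~\ref{def:interleave}, taking $z=b/(a+b)$. There $a_k$ is placed at the unique $n$ with $\lfloor nz\rfloor=k<\lfloor (n+1)z\rfloor$, that is, at $n=\lceil (k+1)/z\rceil-1$, which is exactly $n_k$ from Step 1. The elements of $B$ therefore fill the complement of $\{n_k\}$. By Theorem~\ref{thm:covering} they fill it in order, and by the partition this complement is exactly the $\sim\!\Theta$ sequence, taken in increasing order. So $\Theta(\varphi(A,B))=A$ and $\sim\!\Theta(\varphi(A,B))=B$. Every quantity used is a floor or ceiling of a rational number, so the construction needs rational arithmetic only.

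I expect the main obstacle to be the alignment of shifts in Steps 1 and 3, where an off-by-one error would break condition 5. The lowest-terms reduction is essential. Without it, $q=(a+b)/g$ and the inequality $1/q\le\alpha+\alpha'$ fails when $g>1$. To guard against both, I would check the parameters on a small case such as $a=1$, $b=2$ before writing the general argument.
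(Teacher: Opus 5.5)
Your proof is correct. It shares the paper's overall strategy: write the selection sequences as rational Beatty sequences and use the sufficiency direction of Fraenkel's theorem. The decomposition, however, is different.

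The paper starts from the residue sequence alone and discards its initial term $0$. It identifies the remaining terms as $\mathcal{B}(a/(a+b),0)$ and solves condition~5 for the complementary shift $\beta'=\gcd(a,b)/(a+b)$. It then has to show that this Fraenkel complement coincides with the $\Theta$ sequence. That matching is the identity $\lfloor na/b-\gcd(a,b)/b\rfloor+1=\lceil na/b\rceil$, proved by a case split on whether $b\mid a$.

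You instead pass to lowest terms and put both selection sequences into Beatty form directly. You align them by shifting every value by $1$ rather than dropping the zero, and then you only check the five conditions for the resulting pair. Your identity $\lceil y/b\rceil-1=\lfloor (y-1)/b\rfloor$ with $y=(k+1)(a+b)$ is exactly the paper's matching identity in the coprime case. It rests on division with remainder, not on the three floor/ceiling rules alone. The reduction to lowest terms is what lets you bypass the case analysis.

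Your route buys three things:
\begin{itemize}
\item it is shorter;
\item it treats both sequences and the index $0$ uniformly;
\item your Step~5 genuinely proves the inversion claim, namely that $\varphi$ puts $a_k$ at position $n_k$ and the $b_m$ in order on the complement. The paper only asserts this once the partition is in hand.
\end{itemize}

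The paper's route buys two things of its own. It needs no reduction, since $\alpha'=0$ and $\beta'=\gcd(a,b)/(a+b)$ satisfy condition~5 for any representation of the ratio. It also exhibits the $\Theta$ sequence as the complement forced by Fraenkel's conditions, rather than as a pair verified after the fact.
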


\begin{proof}
\emph{Part 1 (reduction to Beatty form).}
The selection sequence of the de-interleave residue
in~\eqref{eq:deinterleave-b} is
\begin{equation*}
\left( n + \left\lfloor \frac{nb}{a} \right\rfloor \right)_{n=0}^{\infty}.
\end{equation*}
Its initial term ($n=0$) is $0$; the terms for $n \ge 1$ form the Beatty
part. Since $n \in \mathbb{N}$, property~\eqref{eq:fc3} gives
$n + \lfloor nb/a \rfloor = \lfloor n + nb/a \rfloor$, so we may seek
$\alpha,\alpha'$ with
\begin{equation*}
\left( \left\lfloor \frac{n-\alpha'}{\alpha} \right\rfloor \right)_{n=1}^{\infty}
=
\left( \left\lfloor n\,\frac{a+b}{a} \right\rfloor \right)_{n=1}^{\infty}.
\end{equation*}
Reading off slope and intercept: with shift $\alpha' = 0$ we obtain
$\alpha = a/(a+b)$, hence the selection sequence restricted to $n \ge 1$
is exactly
\begin{equation}
\label{eq:beatty-form}
\mathcal{B}\!\left( \frac{a}{a+b},\, 0 \right)
= \left( \left\lfloor n\,\frac{a+b}{a} \right\rfloor \right)_{n=1}^{\infty}.
\end{equation}

\emph{Part 2 (verification of the five conditions; residue sequence).}
We check the conditions of Theorem~\ref{thm:fraenkel-orig} for
$\alpha = a/(a+b)$, $\alpha' = 0$:
\begin{enumerate}
\item $0 < a/(a+b) < 1$ holds for all $a,b > 0$;
\item $\alpha + \beta = 1$ holds with $\beta = b/(a+b)$;
\item for $\alpha' = 0$ this reduces to condition 1;
\item vacuous, since $\alpha$ is rational;
\item the least $q$ with $q\alpha \in \mathbb{N}$ is
$q = (a+b)/\gcd(a,b)$; then $1/q \le \alpha + \alpha' = \alpha$ holds, and
$\lceil q\alpha' \rceil + \lceil q\beta' \rceil = 1$ with
$\alpha' = 0$ forces $\lceil q\beta' \rceil = 1$, i.e.\
$0 < \beta' \le \gcd(a,b)/(a+b)$. Every admissible value generates the
same sequence (the complement of~\eqref{eq:beatty-form} in $\mathbb{N}$
is unique), and we fix $\beta' = \gcd(a,b)/(a+b)$.
\end{enumerate}
The sequence complementary to~\eqref{eq:beatty-form} in the sense of
Fraenkel's conditions is therefore
$\mathcal{B}\bigl( b/(a+b),\, \gcd(a,b)/(a+b) \bigr)$.
Reindexed by $n \mapsto n+1$ so as to run from $n=0$, in step with the
selection sequences of Definition~\ref{def:deinterleave}, it reads
\begin{equation}
\label{eq:residue}
\left( \left\lfloor
\frac{(n+1) - \frac{\gcd(a,b)}{a+b}}{\frac{b}{a+b}}
\right\rfloor \right)_{n=0}^{\infty}.
\end{equation}
Expanding~\eqref{eq:residue},
\begin{equation*}
\left\lfloor
\frac{(n+1) - \frac{\gcd(a,b)}{a+b}}{\frac{b}{a+b}}
\right\rfloor
=
\left\lfloor n\frac{a}{b} + n + \frac{a}{b} + 1 - \frac{\gcd(a,b)}{b} \right\rfloor .
\end{equation*}
Matching this, termwise for $n \ge 0$, against the selection sequence of
the recovered stream in~\eqref{eq:deinterleave-a},
$\bigl(n + \lceil (n+1)a/b \rceil\bigr)_{n=0}^{\infty}$, and
extracting the integer part $n+1$ via~\eqref{eq:fc3}, the claim reduces
(after substituting $n$ for $n+1$, so that $n$ ranges over
$\mathbb{N}_{>0}$) to the identity
\begin{equation}
\label{eq:key-identity}
\left\lfloor n\frac{a}{b} - \frac{\gcd(a,b)}{b} \right\rfloor + 1
= \left\lceil n\frac{a}{b} \right\rceil ,
\qquad n \in \mathbb{N}_{>0}.
\end{equation}

\emph{Part 3 (case analysis for~\eqref{eq:key-identity}).}
By~\eqref{eq:gcd-cases} two disjoint cases cover the domain.

\emph{Case 1: $\gcd(a,b) = b$, i.e.\ $a/b \in \mathbb{N}$.}
Then $na/b \in \mathbb{N}$, so by~\eqref{eq:fc1}
$\lceil na/b \rceil = \lfloor na/b \rfloor$, and by~\eqref{eq:fc3}
$\lfloor na/b - 1 \rfloor + 1 = \lfloor na/b \rfloor$. Both sides
of~\eqref{eq:key-identity} coincide.

\emph{Case 2: $b \nmid a$, so $1 \le \gcd(a,b) < b$ and
$0 < \gcd(a,b)/b < 1$.}
If $na/b \notin \mathbb{Z}$, then by~\eqref{eq:fc2}
$\lceil na/b \rceil = \lfloor na/b \rfloor + 1$; since the fractional part
of $na/b$ is a nonzero multiple of $\gcd(a,b)/b$, hence at least
$\gcd(a,b)/b$, subtracting $\gcd(a,b)/b$ from $na/b$ cannot cross an
integer below $\lfloor na/b \rfloor$, so
$\lfloor na/b - \gcd(a,b)/b \rfloor = \lfloor na/b \rfloor$
and~\eqref{eq:key-identity} follows.
If $na/b \in \mathbb{Z}$, then $\lceil na/b \rceil = na/b$, and since
$0 < \gcd(a,b)/b < 1$,
$\lfloor na/b - \gcd(a,b)/b \rfloor = na/b - 1$, again
giving~\eqref{eq:key-identity}.

Thus the two selection sequences describing the de-interleave operation
are, up to the unit reindexing above, Beatty sequences satisfying
Fraenkel's conditions for rational parameters: the pair
\eqref{eq:beatty-form} and
$\mathcal{B}\bigl( b/(a+b),\, \gcd(a,b)/(a+b) \bigr)$
partitions $\mathbb{N}$, and together with the initial residue term $0$
from Part~1 the selections partition $\mathbb{N}_0$, the full index set
of the interleaved stream. The recovered and residue streams are
therefore exact.
\end{proof}

\begin{corollary}[Exact invertibility on rationals]
\label{cor:exact}
For streams with rational rates, $\Theta$ and $\sim\!\Theta$ recover the
constituent streams of $\varphi(A,B)$ exactly (bit-for-bit): no tuple is
lost, duplicated, or reordered relative to its constituent stream. The pair
$(\varphi;\Theta,\sim\!\Theta)$ thus behaves as multiplication/division, and
$(\Sigma;\delta)$ as addition/subtraction, on the set of regular time
series.
\end{corollary}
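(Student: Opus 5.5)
We derive Corollary~\ref{cor:exact} by combining Theorem~\ref{thm:covering} with Theorem~\ref{thm:fraenkel}, and the essential work is to check that the positions at which $\varphi$ places elements of $A$ are exactly the positions $\Theta$ reads, and likewise for $B$ and $\sim\!\Theta$. Fix rational rates and write $\Delta_a/\Delta_b=a/b$ with $a,b\in\mathbb{N}_{>0}$, so that $z=\Delta_b/(\Delta_a+\Delta_b)=b/(a+b)$. Set $C=\varphi(A,B)$. By Theorem~\ref{thm:covering}, the set $P_A\subseteq\mathbb{N}_0$ of steps with $d_n=1$ carries $a_0,a_1,\dots$ in increasing order, and its complement $P_B$ carries $b_0,b_1,\dots$ in increasing order. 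Hence the $k$-th smallest element of $P_A$ is the position of $a_k$ in $C$, and the same holds for $P_B$ and $b_k$.

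The main step is to identify $P_A$ and $P_B$ explicitly. Step $n$ lies in $P_A$ precisely when some integer $m$ satisfies $nz< m\le (n+1)z$. Equivalently, $n$ is the unique integer with $m/z-1\le n< m/z$, that is, $n=\lceil m(a+b)/b\rceil-1$. With $m=k+1$, the position of $a_k$ is therefore $\lceil (k+1)(a+b)/b\rceil-1=k+\lceil (k+1)a/b\rceil$, which is exactly the index that~\eqref{eq:deinterleave-a} reads. For the other stream, $P_B=\mathbb{N}_0\setminus P_A$. By Theorem~\ref{thm:fraenkel}, this complement is $\{0\}\cup\mathcal{B}(a/(a+b),0)=\{k+\lfloor kb/a\rfloor\}_{k\ge0}$. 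That sequence is strictly increasing, since its increments are at least $1$, so its $k$-th term is the position of $b_k$, matching~\eqref{eq:deinterleave-b}.

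The main obstacle I expect is a consistency check rather than the arithmetic. The Beatty pair in Theorem~\ref{thm:fraenkel} is indexed by $\mathbb{N}$, while the interleave runs over $\mathbb{N}_0$, and the proof there matches one sequence against~\eqref{eq:deinterleave-a} only up to a unit shift. I would first verify the direct computation of $P_A$ above on small cases, for instance $a=b=1$ and $a=2,b=1$. If a discrepancy appears, for example the rate parameter $z$ corresponding to $b/a$ in place of $a/b$, I would resolve it by consulting the rate formulas in Definition~\ref{def:deinterleave}. There, $\Delta_c=\Delta_a\Delta_b/(\Delta_a+\Delta_b)$ gives back $\Delta_a=\Delta_c\Delta_b/|\Delta_c-\Delta_b|$, since $\Delta_b-\Delta_c=\Delta_b^2/(\Delta_a+\Delta_b)$. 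This shows the recovered intervals are also exact rationals. Once the index sets agree, bit-for-bit recovery is immediate: $\Theta(C)_k=c_{\text{pos}(a_k)}=a_k$ and $\sim\!\Theta(C)_k=b_k$. No element is lost or duplicated, because $P_A\sqcup P_B=\mathbb{N}_0$, and order is preserved because both position maps are strictly increasing. Only rational floor and ceiling operations are involved.

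Finally, the $(\Sigma;\delta)$ half of the corollary. By Definition~\ref{def:sum}, $\Sigma$ pairs each tuple of the faster stream with an attribute block from the slower stream, and $\delta$ deletes the attributes contributed by its second argument. So $\delta(\Sigma(A,B),B)$ returns the tuples of $A$ when $A$ drives the rate, i.e.\ when $\Delta_a\le\Delta_b$. I would state this half only as that analogy, since $\delta$ is formally defined in~\cite{widera2006}.
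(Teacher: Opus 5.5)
Your argument is correct, and it is more complete than the paper's. The paper gives no separate proof: it treats the corollary as immediate from Theorem~\ref{thm:covering} (each element is placed once, in order) and Theorem~\ref{thm:fraenkel} (the read sequences of $\Theta$ and $\sim\!\Theta$ partition $\mathbb{N}_0$). Those two facts alone do not yield $\Theta(\varphi(A,B))=A$. The first never locates the elements of $A$ inside $C$, and the proof of the second never touches~\eqref{eq:interleave}. So nothing in the paper shows that the partition $(P_A,P_B)$ induced by $\varphi$ is the one read by $(\Theta,\sim\!\Theta)$.

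Your computation that $a_k$ sits at $\lceil (k+1)(a+b)/b\rceil-1=k+\lceil (k+1)a/b\rceil$ is exactly that missing bridge. For $a=3$, $b=2$ it gives positions $2,4,7,9$, matching $\mathit{Tau}$ in Proposition~\ref{prop:order}. You should add one line on why the step attached to $m=k+1$ carries $a_k$: $m-1<m-z\le nz<m$ forces $\lfloor nz\rfloor=m-1$.

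Your contingency paragraph is unnecessary. The cases $a=b=1$ and $a=2$, $b=1$ give $2k+1$ and $3k+2$, exactly as~\eqref{eq:deinterleave-a} predicts, so no $a/b$ versus $b/a$ swap arises. Your check that $\Delta_c\Delta_b/\lvert\Delta_c-\Delta_b\rvert=\Delta_a$ covers a point the paper skips.

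As for the trade-off: the paper's framing ties exactness to the Beatty--Fraenkel partition, which is its conceptual thesis. Your route shows the corollary needs only floor/ceiling bookkeeping. You could even drop Theorem~\ref{thm:fraenkel} for $P_B$: with $w=1-z$ one has $n-\lfloor nz\rfloor=\lceil nw\rceil$, and the mirror-image argument places $b_j$ at $\lfloor j/w\rfloor=j+\lfloor jb/a\rfloor$.

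Finally, restricting the $(\Sigma;\delta)$ half to $\Delta_a\le\Delta_b$ is necessary, not merely cautious. Under the paper's description of $\delta$, when $\Delta_a>\Delta_b$, removing $B$'s attributes from $\Sigma(A,B)$ leaves $(a_{\lfloor n\Delta_b/\Delta_a\rfloor})$ at interval $\Delta_b$. That is a held copy of $A$, not $A$ itself. So that half holds only as the analogy you state, and the paper offers no proof of it either.
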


\begin{remark}[Implementation constraint]
\label{rem:rational}
The practical consequence of Theorem~\ref{thm:fraenkel} is a hard
engineering rule: the implementation must not leave the rational domain,
even transiently. An implicit cast of an intermediate result to floating
point violates the premises of the theorem. Materialization to
floating-point form must be deferred until an explicit floor or ceiling
operation is applied. The RetractorDB engine enforces this rule throughout
the compilation and execution pipeline.
\end{remark}

\subsection{Operator Properties Used in Plan Optimization}

The algebra yields rewrite rules applied by the query-plan optimizer. We
state three representative properties.

\begin{proposition}[Event-order perturbation]
\label{prop:order}
The order of elements in an interleaved stream does not, in general, reflect
the order of occurrence of the underlying events.
\end{proposition}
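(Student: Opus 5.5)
The statement is existential in spirit: "does not, in general, reflect" means there exist streams $A,B$ such that the order of elements in $\varphi(A,B)$ disagrees with the order of the implied event times. So I will prove it by exhibiting a single explicit counterexample, computed with Definition~\ref{def:interleave} and the timestamp convention of Definition~\ref{def:stream}.

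\textbf{Step 1: fix the notion of order.} By Definition~\ref{def:stream}, the element $a_k$ of $A$ occurred at time $(k+1)\Delta_a$, and $b_j$ of $B$ at time $(j+1)\Delta_b$. In $C=\varphi(A,B)$, position $n$ carries the implied timestamp $(n+1)\Delta_c$. "Reflecting the order of events" means: whenever $c_n$ precedes $c_{n'}$ ($n<n'$), the underlying event time of $c_n$ is at most that of $c_{n'}$. To refute this it suffices to find $n<n'$ with $c_n=b_j$, $c_{n'}=a_k$ and $(j+1)\Delta_b>(k+1)\Delta_a$, or the symmetric situation.

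\textbf{Step 2: compute a small instance.} I will take $\Delta_a=1$, $\Delta_b=2$. Then $z=2/3$ and $\Delta_c=2/3$, and $d_n=\lfloor (n+1)z\rfloor-\lfloor nz\rfloor$ for $n=0,1,2,\dots$ equals $0,1,1,0,1,1,\dots$. By the selection rule in the proof of Theorem~\ref{thm:covering}, this gives
\[
C=(b_0,a_0,a_1,b_1,a_2,a_3,\dots).
\]
Here $b_0$, which occurred at time $2$, precedes $a_0$, which occurred at time $1$. The order is therefore violated already at $n=0$, forced by $d_0=0$.

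In fact the argument gives a general mechanism for any $0<z<1$: $d_0=0$ always puts $b_0$ first, while $a_0$'s event time $\Delta_a$ is less than $b_0$'s event time $\Delta_b$ whenever $\Delta_a<\Delta_b$. It is worth stating this to show the phenomenon is systematic and not an artifact of the chosen numbers.

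\textbf{Step 3: rule out that the claim is vacuous.} One might object that for equal rates the order could coincide. The statement only says "in general", so one counterexample suffices. I will nonetheless note that the discrepancy arises because the Beatty selection uses the floor of $nz$ rather than merging by event time.

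The only real obstacle is interpretive: making precise what "order of occurrence" means, and tying it to the implied timestamps $(n+1)\Delta$. Once that convention is fixed, the verification is a routine evaluation of floors.
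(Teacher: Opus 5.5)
Your proof is correct and takes the same route as the paper: a single explicit counterexample, evaluated with Definition~\ref{def:interleave} and the implied timestamps $(n+1)\Delta$ of Definition~\ref{def:stream}. The paper uses $\varphi(\mathit{Epsilon},\mathit{Alfa})$ with $\Delta_a=3$, $\Delta_b=2$ and points to an interior inversion ($c$ at second $9$ placed after $5$ at second $10$); your instance $\Delta_a=1$, $\Delta_b=2$ inverts already at $n=0$ (and again in the interior, e.g.\ $b_1$ at time $4$ before $a_2$ at time $3$), and your remark correctly extends the $n=0$ inversion to every pair with $\Delta_a<\Delta_b$.
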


\begin{proof}[Proof (by counterexample)]
Consider streams $\mathit{Alfa} = (\{1,2,3,4,5,6,\dots\},\, 2)$ and
$\mathit{Epsilon} = (\{a,b,c,d,e,f,\dots\},\, 3)$. Then
$\varphi(\mathit{Epsilon},\mathit{Alfa})$ produces
$\mathit{Tau} = (\{1,2,a,3,b,4,5,c,6,d,\dots\},\, 6/5)$. In $\mathit{Tau}$
the tuple $c$ appears \emph{after} the tuple $5$, yet $c$ occurs in
$\mathit{Epsilon}$ at second $9$ while $5$ occurs in $\mathit{Alfa}$ at
second $10$. Hence any analysis with respect to embedded event time must
first apply de-interleaving to recover the constituent streams.
\end{proof}

\begin{proposition}[Commutativity of the sum]
\label{prop:sum-comm}
Up to attribute order, the stream sum is commutative:
$\Sigma(A,B) \equiv \Sigma(B,A)$.
\end{proposition}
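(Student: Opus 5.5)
The plan is to unfold Definition~\ref{def:sum} for both argument orders and compare the two outputs termwise. The equivalence $\equiv$ will be read as: the two output streams have the same interval, and for every $n\ge 0$ their $n$-th tuples coincide after a fixed permutation of attributes. That permutation swaps the block contributed by $A$ with the block contributed by $B$, and it is the same for every $n$. Since $\Delta_c=\min(\Delta_a,\Delta_b)$ is symmetric in its arguments, the intervals of $\Sigma(A,B)$ and $\Sigma(B,A)$ agree at once. The remaining work is to match the tuple sequences.

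Write $\Sigma(A,B)=(c_n,\Delta_c)$ and $\Sigma(B,A)=(c'_n,\Delta_c)$. For $c'_n$, apply~\eqref{eq:sum} with the roles exchanged: the first argument is now $B$ with interval $\Delta_b$, the second is $A$ with interval $\Delta_a$. I will split into three cases according to the ordering of $\Delta_a$ and $\Delta_b$.

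\emph{Case $\Delta_a<\Delta_b$.} Here $c_n=(a_n, b_{\lfloor n\Delta_a/\Delta_b\rfloor})$. In $\Sigma(B,A)$ the first argument has the larger interval, so the second branch of~\eqref{eq:sum} applies. With the roles swapped it gives $c'_n=(b_{\lfloor n\Delta_a/\Delta_b\rfloor}, a_n)$. This is $c_n$ with the two attribute blocks exchanged.

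\emph{Case $\Delta_a>\Delta_b$.} This is symmetric to the previous case. We get $c_n=(a_{\lfloor n\Delta_b/\Delta_a\rfloor}, b_n)$, and the first branch applied to $\Sigma(B,A)$ gives $c'_n=(b_n, a_{\lfloor n\Delta_b/\Delta_a\rfloor})$.

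\emph{Case $\Delta_a=\Delta_b$.} This is the only point needing care, because the non-strict inequality in~\eqref{eq:sum} sends both $\Sigma(A,B)$ and $\Sigma(B,A)$ to the first branch, not to mirrored branches. Then $c_n=(a_n,b_{\lfloor n\rfloor})=(a_n,b_n)$ and $c'_n=(b_n,a_n)$, using $\lfloor n\cdot 1\rfloor=n$ for $n\in\mathbb{N}_0$, which is~\eqref{eq:fc1}. The two branches coincide on the tie, so the asymmetric tie-break is harmless.

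In all three cases $c'_n=\sigma(c_n)$ for the block-swapping permutation $\sigma$, which is independent of $n$. This gives $\Sigma(A,B)\equiv\Sigma(B,A)$. No Beatty or Fraenkel machinery is needed. The index maps in the two orders are literally the same rational floor expressions, so exact rational arithmetic (Remark~\ref{rem:rational}) makes them agree bit-for-bit. The only step I expect to need explicit attention is the tie case $\Delta_a=\Delta_b$ just described.
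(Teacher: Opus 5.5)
Your proof is correct and follows the paper's route: unfold Definition~\ref{def:sum} in both argument orders, note that $\Delta_c=\min(\Delta_a,\Delta_b)$ is symmetric, and observe that in the tie $\Delta_a=\Delta_b$ both orders take the first branch; the paper handles this tie parenthetically inside a single WLOG case $\Delta_a\le\Delta_b$, whereas you give it a separate third case. One small nit: $\lfloor n\rfloor=n$ for $n\in\mathbb{N}_0$ follows directly from the definition of the floor (or from~\eqref{eq:fc3} with $x=0$, $C=n$), not from~\eqref{eq:fc1}.
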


\begin{proof}
Assume $\Delta_a \le \Delta_b$; the opposite case is symmetric. The first
case of Definition~\ref{def:sum} gives
$\Sigma(A,B)_n = ( a_n,\, b_{\lfloor n\Delta_a/\Delta_b \rfloor} )$, while
for $\Sigma(B,A)$ the roles of the arguments are exchanged and its second
(or, when $\Delta_a = \Delta_b$, first) case applies, giving
$\Sigma(B,A)_n = ( b_{\lfloor n\Delta_a/\Delta_b \rfloor},\, a_n )$; both
carry $\Delta_c = \Delta_a$. The two streams coincide up to the order of
the concatenated attributes.
\end{proof}

\begin{theorem}[Interleave shift-matching]
\label{thm:shift-match}
The interleave is not commutative in general: since $0 < z < 1$, the
equality branch of~\eqref{eq:interleave} always holds at $n = 0$, so
$\varphi(A,B)$ begins with $b_0$ whereas $\varphi(B,A)$ begins with
$a_0$. It is, however, equivariant under rate-matched time shifts: if
$i,k \in \mathbb{N}$ are chosen with $i\Delta_a = k\Delta_b$ (both
arguments shifted by the same duration), then
\begin{equation}
\label{eq:shift-match}
\varphi\bigl( \tau_i(A),\, \tau_k(B) \bigr)
= \tau_{i+k}\bigl( \varphi(A, B) \bigr).
\end{equation}
\end{theorem}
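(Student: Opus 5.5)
The plan is to verify \eqref{eq:shift-match} termwise from Definition~\ref{def:interleave}, using the fact that a rate-matched shift advances the Beatty index $\lfloor nz \rfloor$ by an exact integer. First, the non-commutativity remark. Since $0<z<1$, $\lfloor 0\cdot z\rfloor = \lfloor z\rfloor = 0$, so $c_0 = b_0$. For $\varphi(B,A)$ the parameter is $z' = \Delta_a/(\Delta_a+\Delta_b)$, which also lies in $(0,1)$, so that interleave begins with $a_0$. A concrete witness that the two interleaves differ is the pair of streams in Proposition~\ref{prop:order}.

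For the identity itself, I first note that $\tau_i$ and $\tau_k$ leave the intervals unchanged (Definition~\ref{def:shift}). Hence both sides of \eqref{eq:shift-match} have the same $z$ and the same $\Delta_c$, and only the element sequences need comparing. The key step, which I expect to be the only nontrivial point, is the integrality claim
\[
(i+k)\,z = i .
\]
To prove it, rewrite $i\Delta_a = k\Delta_b$ as $i(\Delta_a+\Delta_b) = (i+k)\Delta_b$ by adding $i\Delta_b$ to both sides. Dividing by $\Delta_a+\Delta_b$ gives $(i+k)z = i$. By property~\eqref{eq:fc3}, for every $n\ge 0$,
\[
\lfloor (n+i+k)z\rfloor = \lfloor nz\rfloor + i,
\qquad
\lfloor (n+1+i+k)z\rfloor = \lfloor (n+1)z\rfloor + i .
\]

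Consequently the branch condition of \eqref{eq:interleave} at index $n+i+k$ coincides with the branch condition at index $n$. Write $c = \varphi(A,B)$ and $c' = \varphi(\tau_i(A),\tau_k(B))$; I compare $c_{n+i+k}$ with $c'_n$ in the two branches.

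\emph{Equality branch.} Here
\[
c_{n+i+k} = b_{(n+i+k) - \lfloor nz\rfloor - i} = b_{k + (n - \lfloor nz\rfloor)} = (\tau_k B)_{n-\lfloor nz\rfloor} = c'_n .
\]

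\emph{Inequality branch.} Here
\[
c_{n+i+k} = a_{\lfloor nz\rfloor + i} = (\tau_i A)_{\lfloor nz\rfloor} = c'_n .
\]

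In both branches $c_{n+i+k} = c'_n$. By Definition~\ref{def:shift}, $\bigl(\tau_{i+k}(\varphi(A,B))\bigr)_n = c_{n+i+k}$, which establishes \eqref{eq:shift-match}. Everything outside the integrality step is routine index bookkeeping. As a sanity check, Theorem~\ref{thm:covering} confirms that exactly $i$ elements of $A$ and $k$ elements of $B$ occupy the first $i+k$ slots, consistent with the shift amounts.
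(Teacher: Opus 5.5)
Your proposal is correct and follows essentially the same route as the paper. Both derive the integrality $(i+k)z=i$ from $i\Delta_a=k\Delta_b$, use~\eqref{eq:fc3} to get $\lfloor (n+i+k)z\rfloor=\lfloor nz\rfloor+i$ so that the branches of~\eqref{eq:interleave} align, and then compare elements branch by branch; your explicit check of the non-commutativity remark, via $z'=\Delta_a/(\Delta_a+\Delta_b)$ and the witness from Proposition~\ref{prop:order}, is a small addition that the paper leaves inside the statement.
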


\begin{proof}
Both sides carry the interval $\Delta_c$ of~\eqref{eq:interleave}, so it
suffices to compare elements. From $i\Delta_a = k\Delta_b$,
\begin{equation*}
(i+k)\,z = (i+k)\,\frac{\Delta_b}{\Delta_a+\Delta_b}
= \frac{i\Delta_b + i\Delta_a}{\Delta_a+\Delta_b} = i \in \mathbb{N},
\end{equation*}
hence, by~\eqref{eq:fc3}, for every $n \ge 0$ and $m := n+i+k$,
\begin{equation*}
\lfloor m z \rfloor = \lfloor n z + i \rfloor = \lfloor n z \rfloor + i .
\end{equation*}
In particular $\lfloor m z \rfloor = \lfloor (m+1) z \rfloor$ if and only
if $\lfloor n z \rfloor = \lfloor (n+1) z \rfloor$: position $n$ of the
left-hand side of~\eqref{eq:shift-match} and position $m$ of
$\varphi(A,B)$ take the same branch of~\eqref{eq:interleave}. On the
equality branch, the right-hand side selects
\begin{equation*}
b_{\,m - \lfloor m z \rfloor}
= b_{\,n+i+k - \lfloor n z \rfloor - i}
= b_{(n - \lfloor n z \rfloor) + k},
\end{equation*}
which is precisely the element $\bigl(\tau_k(B)\bigr)_{\,n-\lfloor n z
\rfloor}$ selected by the left-hand side; on the inequality branch it
selects $a_{\lfloor m z \rfloor} = a_{\lfloor n z \rfloor + i} =
\bigl(\tau_i(A)\bigr)_{\lfloor n z \rfloor}$, again matching the
left-hand side. The two streams coincide elementwise.
\end{proof}

These properties are not formalities: complementarity of $\varphi/\Theta$
and $\Sigma/\delta$ (Theorems~\ref{thm:covering}--\ref{thm:fraenkel}),
exactness on rationals (Corollary~\ref{cor:exact}), and the rewrite rules
(Propositions~\ref{prop:order}--\ref{prop:sum-comm},
Theorem~\ref{thm:shift-match}) are the license
under which the optimizer transforms query plans into cheaper equivalent
forms without changing results, and the reason the engine's outputs are
replay-stable. Full derivations in their original form appear
in~\cite{widera2006,widera2006pl}; a numerical verification of all equations
over exact rationals (Python \texttt{Fraction} prototypes) is available in
the project repositories~\cite{rdb_docs}.

\section{Related Work}
\label{sec:related}

The problem addressed by RetractorDB does not belong wholly to any single
discipline; it sits at the intersection of five. For each neighboring line
of research we answer three questions: what it has already solved, how
RetractorDB differs, and what it does \emph{not} address. The superposition
of the five answers delineates the gap this work fills
(Table~\ref{tab:gap}).

\subsection{Number Theory: Beatty Sequences and Covering Systems}

The algebra of Section~\ref{sec:foundations} rests on Beatty
sequences~\cite{beatty1926} and Fraenkel's rational-domain
generalization~\cite{fraenkel1969}. Beatty sequences have a rich
combinatorial literature and documented applications in aperiodic tilings,
periodic scheduling, digital lines in computer vision, and formal language
theory~\cite{shallit1997}. The line remains active: Schaeffer, Shallit, and
Zorcic showed that inhomogeneous Beatty sequences with quadratic irrational
slope are automaton-synchronizable, yielding decidability of their
first-order theory~\cite{schaeffer2024}. Most relevant here is the work of
Berger, Felzenbaum, and Fraenkel on disjoint covering systems of
\emph{rational} Beatty sequences~\cite{berger1986}, precisely the variant
underlying the de-interleave operation of Theorem~\ref{thm:fraenkel}.

\emph{What this line does not address:} number theory studies these
sequences as mathematical objects. It does not connect them to a database
engine, a stream-processing model, or signal processing. It supplies the
bricks, not the building.

\subsection{Beatty-Based Scheduling (Pinwheel)}

This is the closest neighboring application area at the level of proof
machinery, and we discuss it with particular care. In pinwheel scheduling,
tasks with different repetition periods are assigned to time slots belonging
to complementary Beatty sequences~\cite{eppstein2023}. Recent work on
pinwheel scheduling with real periods conducts its proofs on
Rayleigh/Beatty partitions using floor/ceiling identities of the form
$\lceil (m+l)\alpha \rceil - \lceil m\alpha \rceil$~\cite{pinwheel2025},
essentially the same apparatus as the proof of Theorem~\ref{thm:fraenkel}.
We read this in two ways. It is independent confirmation that the approach
is sound and natural. It also narrows what may be claimed as novel:
``Beatty sequences for scheduling'' exists and is actively published.
Notably, RetractorDB uses this machinery \emph{internally} for slot
scheduling as well (Section~\ref{sec:pipeline}), but that is not where
the original contribution lies.

\emph{What this line does not address:} scheduling treats the sequences as a
tool for allocating time slots to processors. It does not build a data
algebra on them, does not express signal operations with them, and does not
define a query language.

\subsection{Multirate DSP: Nonuniform Sampling and Filter Banks}

In DSP terms, interleave and de-interleave are sample-rate conversions
between streams of different $\Delta$. A mature literature exists. The
closest bridge is the characterization of nonuniform perfect-reconstruction
filter banks via unit-step responses by Samadi, Ahmad, and
Swamy~\cite{samadi2004}, which introduces step-function (implicitly floor)
machinery into multirate DSP. The broader line covers periodic nonuniform
sampling of bandlimited signals~\cite{margolis2008} and, directly
pertinent, perfect-reconstruction filter banks with \emph{rational}
sampling factors by Kova\v{c}evi\'{c} and Vetterli~\cite{kovacevic1993}.
Number-theoretic constructions do appear in this literature: Ramanujan
sums have been applied to periodicity extraction and signal
recovery~\cite{kalra2025}. However,
to the best of our knowledge, neither Beatty sequences nor Fraenkel's
theorem has been used in this literature, and this is part of the gap.

\emph{What this line does not address:} DSP operates in the $z$-domain and
frequency domain, on frames and bases. It does not cast resampling as a
declarative algebraic operator, nor embed it in a database system. The
coefficients may be rational, but the apparatus is analysis, not the number
theory of set partitions.

\subsection{Data Stream Management Systems (DSMS)}

On the database side, the canon is CQL from the Stanford STREAM
project~\cite{arasu2006}, where a stream is a potentially unbounded multiset
of elements $\langle s,\tau\rangle$ and query semantics is built on windows
and stream--relation mappings. The second close neighbor is the temporal
algebra of Kr\"{a}mer and Seeger (the PIPES system), which provides
deterministic results for continuous queries and a rich set of
transformation rules for optimization~\cite{kramer2009}. This is the proper
reference point for the algebra of Section~\ref{sec:foundations} and its
rewrite rules. The difference, however, is fundamental and concerns the data
model itself: CQL and PIPES build semantics on the timestamped model
$\langle s,\tau\rangle$, with per-tuple timestamps and window-based
operators. RetractorDB adopts the differential model $(s_n,\Delta)$ with a
constant rational $\Delta$ per stream, and derives the rate-aligning
operators from number theory. This is not a cosmetic difference in syntax;
it is a different data model, leading to a different operator class
(interleave, de-interleave) and a different optimization method. In
deployment terms the relationship is complementary rather than competitive:
RetractorDB acts as an edge-side pre-processing and buffering stage whose
exact, deterministic outputs can feed a window-based DSMS.

\emph{What this line does not address:} DSMS target approximate, scalable
processing of unbounded streams with tolerance for temporal disorder. They
do not aim at exact, deterministic DSP operations under strict timing
discipline, and they do not draw on number theory for resampling semantics.

\subsection{Time Series Management Systems and In-Database DSP}

This is the narrowest niche and the closest to RetractorDB's purpose. The
canonical survey is Jensen, Pedersen, and Thomsen~\cite{jensen2017}. The
Plato system described there is the nearest true ``DSP inside the
database,'' coupling an RDBMS with signal-processing methods to avoid
exporting data to external tools. Other ``signals in the database''
approaches reduce to approximation and compression: wavelet, dictionary,
and shape-based representations. All of them, however, treat DSP as
approximation or after-the-fact analytics. None makes signal-processing
operations \emph{exact, deterministic first-class operators} inside the
query algebra. RetractorDB does not attempt to compete on the TSMS axes of
ingestion scale, compression, and retention; as an ESPE it runs at the
edge, in front of a TSMS or TSDB, delivering exact signal-processing
results and correctable partial records for that upstream store to retain
at scale.

\emph{What this line does not address:} TSMS optimize ingestion scale,
compression, and retention. DSP is a second-class citizen in them: an
analytical add-on, not the core of the semantics.

\subsection{The Gap}

\begin{table}[t]
\centering
\caption{Coverage of the four defining properties across neighboring
research lines. No existing line occupies the intersection.}
\label{tab:gap}
\begin{tabular}{lcccc}
\toprule
Research line & Beatty/Fraenkel & Exact DSP &
\begin{tabular}{@{}c@{}}Stream algebra /\\query language\end{tabular} &
\begin{tabular}{@{}c@{}}Deterministic\\timing discipline\end{tabular} \\
\midrule
Number theory & \checkmark & -- & -- & -- \\
Scheduling (pinwheel) & \checkmark & -- & -- & partial \\
Multirate DSP & -- & \checkmark & -- & -- \\
DSMS (CQL, PIPES) & -- & -- & \checkmark & -- \\
TSMS / in-DB DSP & -- & partial & partial & -- \\
\midrule
\textbf{RetractorDB} & \checkmark & \checkmark & \checkmark & \checkmark \\
\bottomrule
\end{tabular}
\end{table}

Superimposing the five layers makes the picture legible: each discipline
touches one or two walls of the problem, but none occupies their
intersection. The contribution of RetractorDB lies not in any single
component but in their synthesis: the use of covering systems (rational
Beatty sequences and Fraenkel's theorem) as the semantic foundation of a
declarative stream algebra that realizes exact signal-processing operators
inside a database engine under a deterministic timing discipline. Number
theory has Beatty sequences and even scheduling, but connects them to
neither databases nor DSP. DSP has multirate processing and rational filter
banks, but does not reach for Fraenkel and does not cast this as a query
language. DSMS have stream algebras and optimization rules, but on the
windowed model $\langle s,\tau\rangle$, not the differential
$(s_n,\Delta)$. The intersection is empty.

\paragraph{Priority.}
The underlying problem, together with the need for a declarative stream
algebra and a continuous query language, was articulated in 2003--2005 in
the setting of computer-aided fetal monitoring~\cite{widera2003}; the
monitoring systems of that period, however, were implemented entirely by
hand, as neither the formal foundation nor the technology for a
declarative realization was available. The bridge ``covering systems
$\leftrightarrow$ stream alignment and DSP''
was then established in peer-reviewed form in
2006~\cite{widera2006,widera2006pl},
but in a venue of low international discoverability; its citations to date
circulate mainly in the Polish literature. Meanwhile, the scheduling
community has been publishing the same Beatty/Fraenkel machinery in
2023--2025~\cite{eppstein2023,pinwheel2025}. Part of the present paper's
purpose is to place that bridge, together with its proofs
(Section~\ref{sec:foundations}), into well-indexed circulation.

\paragraph{Methodological caveat.}
This is a directed review, not a systematic one: it is based on targeted
search within the five lines above, not on a complete citation analysis.
A forward-citation sweep of~\cite{samadi2004} supports the claim: as of
July 2026 (per Semantic Scholar), its only recorded citations are a
Gabor-window design paper, two multirate system-theoretic papers, and the
2006 bridge~\cite{widera2006} itself; none of them uses Beatty sequences
or Fraenkel's theorem. The nearest use of this machinery outside number
theory known to us is the construction of exponential Riesz bases from
Beatty--Fraenkel sequences by Pfander, Revay, and
Walnut~\cite{pfander2024}; it belongs to pure harmonic analysis and does
not touch filter banks or sample-rate conversion. A full treatment would
still add a systematic sweep of the pinwheel line~\cite{eppstein2023} and
of the filter-bank literature at large; if a use of Fraenkel's theorem in
multirate DSP exists, it would narrow the novelty claim accordingly and
should be incorporated here.

\section{System Overview}
\label{sec:system}

\subsection{Three-Binary Architecture}
RetractorDB is implemented in C++ as three cooperating programs:
\begin{itemize}
\item \textbf{\texttt{xretractor}}: parser, compiler, and runtime for RQL
plans, including compile-only and plan-dump modes;
\item \textbf{\texttt{xqry}}: client querying running streams via
inter-process communication, with raw and formatted output modes;
\item \textbf{\texttt{xtrdb}}: binary artifact inspector and editor used for
debugging, deterministic test-data generation, and result verification.
\end{itemize}

\subsection{Data and Artifact Model}
Artifacts are represented as binary payload files with associated metadata:
a \texttt{.desc} schema descriptor, a \texttt{.meta} null/gap index
(run-length-oriented metadata), and an optional \texttt{.shadow} file
supporting non-destructive overwrites. This separation supports compact
storage while preserving queryable structural information, and provides a
practical audit trail for null propagation, transmission gaps, and
historical-correction behavior. Together with the deterministic execution
model, it makes outputs directly comparable across runs: identical input
traces yield byte-identical artifacts, a property exercised continuously in
the project's integration test suite.

Two consequences of this model matter for the edge role of the engine.
First, because the slot timeline of every stream is fixed by its rational
$\Delta$, artifact slots for future events exist deterministically before
their payloads arrive; an artifact may therefore hold a \emph{partial}
record spanning both past and scheduled future events, with late payloads
and historical corrections applied non-destructively through the shadow
mechanism. Second, artifacts are self-describing (payload plus
\texttt{.desc} schema and \texttt{.meta} null/gap index) and therefore
directly transmissible: an upstream TSDB or DSMS can ingest them together
with their structural metadata. The artifact is thus both the storage
format and the transmission format: the concrete mechanism by which the
engine manages and stores data on behalf of the upstream system of
record.

\subsection{Event Artifacts: A Deployment Scenario}
\label{sec:event-artifact}

To make the edge role concrete, consider a fault-monitoring deployment:
the classical \emph{disturbance recorder} pattern known from power-grid
protection and cardiac event recording. A high-frequency source (a
vibration or current sensor, declared with an exact rational interval) is
recorded continuously at the edge. A detection chain performs FIR
bandpass filtering, rectification, and envelope thresholding, using the
same window-operator (\texttt{@}) and stream-sum convolutions as the
Pan--Tompkins pipeline of Section~\ref{sec:ecg}. A \texttt{RULE}
statement raises a trigger when the envelope crosses its threshold at
some slot~$t$.

The trigger produces an \emph{event artifact}: a cut of the recorded
stream spanning $[\,t - T_{\mathrm{pre}},\; t + T_{\mathrm{post}}\,]$,
where $T_{\mathrm{pre}}$ may range from seconds to hours of pre-fault
history, bounded only by edge retention. The two halves of this interval
show the two consequences noted in the preceding subsection in action.
The \emph{past} half is a slice of data already recorded; it is
available immediately, including any non-destructive corrections applied
through the shadow mechanism. The \emph{future} half exists at trigger
time as a deterministic slot timeline, because the stream's rational
$\Delta$ fixes every future slot in advance. The event artifact can
therefore be opened at the moment of detection with its full time axis
already laid out; it fills as payloads arrive and closes after
$T_{\mathrm{post}}$. The completed artifact (payload, \texttt{.desc}
schema, and \texttt{.meta} null/gap index) is then transmitted as a
self-describing unit to the upstream TSDB or DSMS, where the actual
fault analysis takes place with full pre- and post-fault context, while
the raw high-frequency stream never leaves the edge device.

\paragraph{Variant: adaptive-resolution transmission.}
The scenario extends naturally to steady-state bandwidth management.
Rather than transmitting nothing until an event occurs, the engine can
feed the upstream store continuously at reduced resolution: the
de-interleave operator (\texttt{\&}) extracts a low-rate substream of the
recording and transmits it as ordinary telemetry at a fraction of the
source bandwidth. Crucially, this substream is not a lossy preview but an
exact component of the original signal. When the detection chain
recognizes an event, the engine ships the complementary substream
(\texttt{\%}) for the window
$[\,t - T_{\mathrm{pre}},\; t + T_{\mathrm{post}}\,]$. By
Theorem~\ref{thm:fraenkel}, the two substreams partition the original
stream with neither gaps nor duplicates, so the upstream system recovers
the full-resolution signal around the event \emph{exactly}: it
interleaves the samples it has been receiving all along with those that
arrive after the trigger, bit-for-bit, with no resampling error. The link
thus carries low-rate data in steady state and full-rate data only for
the intervals that matter, while the receiving TSDB retains a
mathematically exact record of both regimes.

This subsection is an illustrative deployment scenario, not a validated
example in the sense of Section~\ref{sec:examples}. The detection chain
reuses operator constructions already demonstrated deterministically in
Section~\ref{sec:ecg}, the artifact properties it relies on are those
of the preceding subsection, and the adaptive-resolution variant rests
solely on the interleave/de-interleave exactness proved in
Section~\ref{sec:foundations}. End-to-end integration tests of the
triggered artifact cut and of split-path transmission are planned engine
work.

\subsection{Query Language}
RQL realizes the algebra of Section~\ref{sec:foundations}
(Table~\ref{tab:operators}) as a declarative language with
\texttt{DECLARE} (source streams with rational rates), \texttt{SELECT}
(continuous transformations), and \texttt{RULE} (alerting) statements.
Following Remark~\ref{rem:rational}, the engine keeps all rate and index
computations in exact rational arithmetic; floating-point materialization
occurs only at explicit output boundaries.

\section{Compilation and Runtime Pipeline}
\label{sec:pipeline}

\subsection{Compilation Stages}
Compilation comprises parsing, symbol expansion, alias and unfold handling,
type unification, dependency-graph construction, and interval resolution.
The result is a dependency DAG (query tree) with a resolved rational
interval per stream. Compile-only mode and DOT output are available for plan
inspection.

\subsection{Runtime Scheduling}
Execution traverses active nodes according to a rational timeline and
broadcasts outputs. Runtime progression selects the next slot from interval
counters, conceptually
\begin{equation}
t_k = \min_{\delta \in \mathcal{S}} \bigl( \delta \cdot \mathrm{counter}[\delta] \bigr),
\end{equation}
where $\mathcal{S}$ is the reduced set of primitive intervals. Because slot
generation follows the covering-system structure of
Section~\ref{sec:foundations}, activations are neither dropped nor
duplicated under rational timing.

\subsection{Determinism}
The project's testing discipline states the determinism expectation
explicitly: identical inputs must produce identical artifact outputs. The
\texttt{xtrdb} ecosystem makes this property inspectable and automatable in
continuous integration; the examples of Section~\ref{sec:examples} are drawn
directly from that suite.

\section{Deterministic Query Examples}
\label{sec:examples}

This section validates the semantics with executable examples: each shows
query text together with observed deterministic outputs, sourced from the
engine's integration tests and documentation. Performance-oriented
evaluation is deferred to a subsequent version (see
Section~\ref{sec:limitations}).

\subsection{Example A: Arithmetic Merge (\texttt{+})}
Source: \path{test/IntegrationTest_serial/simple/query.rql} and
\path{pattern-run.txt}.

\begin{verbatim}
STORAGE 'temp'

DECLARE a INTEGER STREAM core0, 0.1 FILE 'datafile2.dat'
DECLARE b INTEGER STREAM core1, 0.2 FILE 'datafile3.dat'

SELECT str1[0]*10,str1[1]*10,str1[1]*str1[0]+20 STREAM str1 FROM core0+core1
\end{verbatim}

Observed output excerpt:
\begin{verbatim}
9 Record(s)
12 Byte(s) per record.
{ INTEGER str1_0 INTEGER str1_1 INTEGER str1_2 }
{ str1_0:100 str1_1:10 str1_2:30 }
{ str1_0:110 str1_1:10 str1_2:31 }
{ str1_0:120 str1_1:20 str1_2:44 }
...
{ str1_0:170 str1_1:40 str1_2:88 }
{ str1_0:100 str1_1:50 str1_2:70 }
\end{verbatim}

This case illustrates deterministic arithmetic over merged streams with
different rates (the sum operator $\Sigma$ of
Section~\ref{sec:foundations}).

\subsection{Example B: Consistent Multi-Read from One Source}
Source: files \path{query-consitency.rql} and \path{pattern_xtrdb.txt}
in \path{test/IntegrationTest_serial/consistency/}.

\begin{verbatim}
DECLARE a INTEGER STREAM core0, 0.1 FILE 'datafile1.txt'

SELECT core[0]+100 STREAM str1 FROM core0
SELECT core[0]+200 STREAM str2 FROM core0
SELECT str2[0]-str1[0] STREAM str3 FROM str1+str2
\end{verbatim}

Observed output excerpt:
\begin{verbatim}
str1 - stream dump
{ str1_0:160 }
{ str1_0:161 }
{ str1_0:162 }
{ str1_0:163 }

str2 - stream dump
{ str2_0:260 }
{ str2_0:261 }
{ str2_0:262 }
{ str2_0:263 }

str3 - stream dump - expected all the same
{ str3_0:100 }
{ str3_0:100 }
{ str3_0:100 }
{ str3_0:100 }
\end{verbatim}

The invariant $\mathit{str3}=100$ confirms consistent read semantics when
one source feeds multiple downstream queries.

\subsection{Example C: Window Operator AGSE (\texttt{@}) with Null Handling}
Source: \path{test/IntegrationTest_serial/agse1/query.rql} and
\path{pattern.txt}.

\begin{verbatim}
DECLARE b INTEGER STREAM core1, 0.1 FILE 'datafile.txt'

SELECT * STREAM signalText0 FROM core1@(1,4)
SELECT * STREAM signalTextR FROM core1@(1,-4)
SELECT * STREAM signalText1 FROM core1@(2,4)
SELECT * STREAM signalText3 FROM core1@(2,2)
SELECT * STREAM signalText5 FROM signalText3@(1,1)
\end{verbatim}

Observed output excerpt (forward and reverse windows):
\begin{verbatim}
SELECT * STREAM signalText0 FROM core1@(1,4)
{ core1_0:10 core1_1:null core1_2:null core1_3:null }
{ core1_0:11 core1_1:10 core1_2:null core1_3:null }
{ core1_0:12 core1_1:11 core1_2:10 core1_3:null }

SELECT * STREAM signalTextR FROM core1@(1,-4)
{ core1_0:null core1_1:null core1_2:null core1_3:10 }
{ core1_0:null core1_1:null core1_2:10 core1_3:11 }
{ core1_0:null core1_1:10 core1_2:11 core1_3:12 }
\end{verbatim}

This example demonstrates directional windowing and explicit null
propagation at window boundaries, recorded in the \texttt{.meta} artifact
layer.

\subsection{Example D: Pan--Tompkins QRS Detection over MIT-BIH ECG Data}
\label{sec:ecg}

The flagship application example expresses a complete Pan--Tompkins
QRS-detection pipeline~\cite{pan1985} as a chain of algebra operators,
operating on record 205 of the MIT-BIH Arrhythmia
Database~\cite{moody2001,physionet} (two channels, 360\,Hz, declared with
the exact rational interval $1/360$\,s). All five classical stages
(bandpass filtering, differentiation, squaring, moving-window integration,
and adaptive thresholding) are expressed in RQL, with FIR convolutions
realized by the window operator \texttt{@} and the stream sum. Source:
\path{examples/ecg/rec205/rec205-qrs.rql}.

\begin{verbatim}
DECLARE MLII INTEGER, V1 INTEGER STREAM ecg, 1/360 FILE 'rec205'
DECLARE bp_coef INTEGER[25] STREAM bpf, 1 FILE 'bp_coef.txt'
DECLARE d_coef INTEGER[5] STREAM df, 1 FILE 'd_coef.txt'

# Channel extraction
SELECT ecg.MLII STREAM mlii FROM ecg VOLATILE

# 1. Bandpass filter (5-15 Hz) -- 25-tap FIR convolution
SELECT * STREAM mlii_win FROM mlii@(1,25) VOLATILE
SELECT mlii_win[_]*bpf[_] STREAM bp_acc FROM mlii_win+bpf VOLATILE
SELECT bp_acc[0]/1000 STREAM bp_out FROM bp_acc.sumc VOLATILE

# 2. Differentiation -- 5-tap FIR: [-1,-2,0,2,1]
SELECT * STREAM bp_win FROM bp_out@(1,5) VOLATILE
SELECT bp_win[_]*df[_] STREAM d_acc FROM bp_win+df VOLATILE
SELECT d_acc[0] STREAM d_out FROM d_acc.sumc VOLATILE

# 3. Squaring (division /1000 prevents int32 overflow)
SELECT d_out[0]*d_out[0]/1000 STREAM sq_out FROM d_out VOLATILE

# 4. Moving-window integration, 30 samples (~83 ms)
SELECT * STREAM mwi_win FROM sq_out@(1,30) VOLATILE
SELECT mwi_win[0] STREAM mwi FROM mwi_win.avg VOLATILE

# 5. Adaptive threshold -- moving average over 180 samples (0.5 s)
SELECT * STREAM mwi_long FROM mwi@(1,180) VOLATILE
SELECT mwi_long[0] STREAM mwi_thr FROM mwi_long.avg VOLATILE

# Output: centered MLII, envelope x5, detection signal x5
SELECT mlii[0]-900, mwi[0]*5, (mwi[0]-mwi_thr[0]*2)*5
STREAM qrs_out FROM mlii+mwi+mwi_thr VOLATILE
\end{verbatim}

\begin{figure}[t]
\centering
\includegraphics[width=0.92\linewidth]{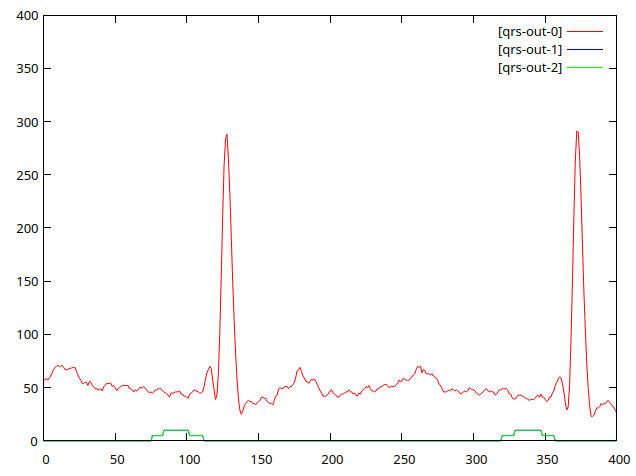}
\caption{Pan--Tompkins pipeline output for MIT-BIH record 205: centered
MLII signal, integrated envelope, and QRS detection signal, plotted live
from the \texttt{qrs\_out} stream via \texttt{xqry}.}
\label{fig:qrs}
\end{figure}

Figure~\ref{fig:qrs} shows the resulting output stream: the centered
MLII signal, the integrated envelope, and the QRS detection signal.
The pipeline demonstrates that a nontrivial, clinically standard DSP
algorithm is expressible entirely within the algebra (no user-defined
functions, no procedural escape hatches) while inheriting the
determinism guarantees of Section~\ref{sec:foundations}: replaying the
recording reproduces the detection output exactly.

\section{Discussion and Limitations}
\label{sec:limitations}

This paper focuses on formal semantics, architectural integration, and
language-level evidence.

\emph{Scope.} RetractorDB is an edge signal processing engine, not a full
time-series database: long-term retention, compression, and large-scale
ad-hoc querying are deliberately left to an upstream TSDB or DSMS that the
engine supports (Section~\ref{sec:intro}). Its data-management role is
bounded and specific: a partial, correctable record of past and
scheduled future events, held in transmissible artifacts. Its processing
role is equally specific: declarative pre-processing and filtering of
high-frequency streams before they reach the central architecture.

\emph{Irregular series.} The requirement of a constant rational $\Delta$
per stream (Definition~\ref{def:stream}) is less restrictive than it may
appear. An irregular series whose inter-arrival times are drawn from a
finite set of rationals embeds \emph{exactly} into a regular stream on
the refined grid $\Delta' = \gcd$ of those intervals, with the artifact's
null/gap metadata layer (Section~\ref{sec:system}) marking the empty
slots; for arbitrary real timestamps the embedding degrades to a
quantization with error bounded by $\Delta'/2$. Unlike zero-stuffing in
multirate DSP, the null pattern distinguishes ``no sample'' from ``sample
of value zero'', and the engine's gap entries further distinguish value
nulls from transmission losses; the run-length-encoded \texttt{.meta}
index keeps the storage cost of the embedding proportional to the number
of null-pattern runs rather than to grid density. A formal treatment of
this embedding (exactness and invertibility conditions, null-propagation
semantics across the operators of Section~\ref{sec:foundations}, and the
scheduling cost of grid refinement) is in preparation.

Beyond scope, two limitations bound the current claims.

\emph{Performance evaluation.} No throughput, latency, or cross-engine
comparison is reported here. A measurement campaign is in progress: the
engine ships a compile-time probe (\texttt{RDB\_BENCH\_PROBE}) recording
per-interval computation time on a monotonic clock, and experiments under a
real-time Linux environment (latency distributions, tail analysis, and an
exactness comparison against a floating-point reference pipeline) are
planned for a subsequent version of this work.

\emph{Timing guarantees.} The engine provides deterministic execution
\emph{semantics} (identical inputs yield identical outputs in identical
order) and a predictable, sequential execution model. We deliberately do
not claim hard real-time guarantees: those require worst-case execution-time
analysis on a real-time operating system, which is future work.

Reproducibility packaging is a further practical point: the artifact release
accompanying this paper pins the exact commit, toolchain, and data
generation procedures for all examples. Code and documentation are publicly
available~\cite{rdb_repo,rdb_docs}, enabling independent inspection of query
programs, runtime behavior, and storage outputs.

\section{Threats to Validity}

The principal construct-level threat concerns the scope of the novelty
claim. As detailed in Section~\ref{sec:related}, the underlying
number-theoretic tools are classical, and the neighboring pinwheel-scheduling
line actively publishes the same proof machinery. The claim defended here is
therefore precisely the synthesis identified in Table~\ref{tab:gap}:
covering systems as the semantic foundation of a declarative stream algebra
with exact DSP operators, realized end-to-end in a database engine,
together with the 2006 priority of that bridge~\cite{widera2006}.

On the evidence side, the examples of Section~\ref{sec:examples} constitute
semantic validation, not performance evaluation; external validity is
limited until the full language surface, adversarial workloads, and
throughput-oriented deployments are evaluated under a pinned, fully
reproducible setup. The related-work review is directed rather than
systematic (Section~\ref{sec:related}), and its conclusions are open to
revision should a use of Fraenkel's theorem in multirate DSP be identified.

\section{Conclusion}

RetractorDB demonstrates a coherent path from number-theoretic covering
systems to a deployable, deterministic edge signal processing engine that
supports, rather than replaces, time-series databases and data stream
management systems. The formal
core (interleave as a sequential covering, de-interleave as a rational
Beatty sequence satisfying Fraenkel's conditions) guarantees that exact
stream resampling is computable, invertible, and optimizable; the engine
realizes this algebra end-to-end, from declarative query semantics through
compile-time dependency planning to runtime slot scheduling and
artifact-level verification. RQL gives the engine its processing role at
the edge: declarative pre-processing and filtering of high-frequency
signals before transmission. The artifact model gives it its
data-management role: a partial, correctable record of past and scheduled
future events, ready for transmission to an upstream TSDB or DSMS.
Together they substantially offload the central architecture. The
deterministic examples, including a
complete Pan--Tompkins pipeline over MIT-BIH data, establish the semantic
evidence; a performance evaluation under real-time conditions is the
subject of ongoing work.

\section*{Acknowledgments}
This research received no institutional funding; it was conducted
independently and self-funded by the author, who declares no
conflicts of interest.


\end{document}